% 3rd Law for Open Quantum Systems Draft by CAS 3-20-2019 to 4-15-2019
% Edited by AS 4-13-2019 to 4-15-2019
% Final edit 4-24-2019

\documentclass[aip,jcp,reprint]{revtex4-1}

\usepackage{tikz}
\usepackage{amsmath, amsthm, amssymb,bbold}
\usepackage{graphicx,color,psfrag}
\usepackage[normalem]{ulem}
\graphicspath{{./Figures/}}

\usepackage[arrowdel]{physics}

\newtheorem{theorem}{Theorem}

\newtheorem{corollary}{Corollary}

\newtheorem{definition}{Definition}

\begin{document}

\title{The third law of thermodynamics in open quantum systems}
\author{Abhay Shastry}
\thanks{These two authors contributed equally.}
\affiliation{Department of Physics, University of Arizona, 1118 East Fourth Street, Tucson, AZ 85721, USA}
\affiliation{Department of Chemistry, University of Toronto, 80 St.\ George Street, Toronto, Ontario M5S 3H4, Canada}
\author{Yiheng Xu}
\thanks{These two authors contributed equally.}
\affiliation{Department of Physics, University of Arizona, 1118 East Fourth Street, Tucson, AZ 85721, USA}
\affiliation{Department of Physics, University of California, San Diego, California 92093, USA}
\author{Charles\ A.\ Stafford}
\affiliation{Department of Physics, University of Arizona, 1118 East Fourth Street, Tucson, AZ 85721, USA}

\date{\today}

\begin{abstract}
We consider open quantum systems consisting of a finite system of independent fermions with arbitrary Hamiltonian coupled to one or more equilibrium
fermion reservoirs (which need not be in equilibrium with each other).
A strong form of the third law of thermodynamics, $S(T) \rightarrow 0$ as $T\rightarrow 0$, 
is proven for {\em fully open quantum systems} in thermal equilibrium with their environment, defined as systems where 
all states are broadened due to environmental coupling.
For generic open quantum systems, it is shown that $S(T)\rightarrow g\ln 2$
as $T\rightarrow 0$, where $g$ is the number of localized states lying exactly at the chemical potential of the reservoir.
For driven open quantum systems in a nonequilibrium steady state, it is shown that the local entropy $S({\bf x}; T) \rightarrow 0$ as
$T({\bf x})\rightarrow 0$, 
except for cases of measure zero arising due to localized states,
where $T({\bf x})$ is the temperature measured by a local thermometer. 
\end{abstract}

\maketitle

\section{Introduction}
\label{sec:intro}

There are
two formulations of the third law of thermodynamics, both due to Nernst: (A) The Nernst Heat Theorem, which states that
the equilibrium entropy of a pure substance goes to zero at zero temperature, and (B) The Unattainability Principle, which
states that it is impossible to cool any system to absolute zero in a finite number of operations.

The third law in open quantum systems has been discussed in various contexts.
Kosloff and collaborators\cite{Levy2012}
consider the Unattainability Principle (B) in the context of quantum absorption refrigirators,
and show that it is not possible to cool to absolute zero in finite time. The authors
warn that the quantum Master equation has to be used carefully, and that violations of the laws of thermodynamics
could result otherwise.
The unattainability principle was challenged in Ref.\ \onlinecite{Kolar2012}, with claims that zero temperature can be reached
but that formulation (A) still holds true. Ref.\ \onlinecite{Cleuren2012}
arrives at a result which is in violation of the unattainability principle as pointed out by
a comment by Kosloff.\cite{Levy2012c}
Kosloff generally advocates for a careful use of the Master equation and in Ref.\ \onlinecite{Kosloff2013}
argues that such apparent violations\cite{Kolar2012} of the laws of thermodynamics are caused by uncontrolled
approximations.
Ref.\ \onlinecite{Masanes2017} provides a proof of the unattainability principle (B) using quantum resource theory,
and clarifies its connection to the heat theorem (A).

Statement (A) of the third law was proven for              %large class of open quantum systems in
a quantum oscillator in contact with various types of heat baths in %Ref.\ \onlinecite{Ford2005}.
Refs.\ \onlinecite{Ford2005} and \onlinecite{OConnell2006}. %Ref.\ \onlinecite{Ford2005} considers 
In Ref.\ \onlinecite{OConnell2006}, O'Connell rebuts
the early claims of the violations of the laws of thermodynamics made in the field of quantum
thermodynamics. In particular, he
focuses on Ref.\ \onlinecite{Nieuwenhuizen2002}, which claims to construct a perpertual motion
machine. As relates to the third law, Ref.\ \onlinecite{Nieuwenhuizen2002} argues that ``neither  the  von  Neumann  entropy  nor  the
Boltzmann entropy vanishes when the bath temperature is zero,'' leading to a claim of
a violation of the third law for nonweak coupling.
O'Connell calculates the von Neumann entropy\cite{OConnell2006} and points out that when the interaction energy is considerable,
the von Neumann formula can only be applied to the entire system and not to the reduced system.

Perhaps
the most flagrant violation of the third law of thermodynamics was put forward by Esposito, Ochoa, and Galperin,\cite{Esposito2015}
who claim not only that Nernst's heat theorem does not hold, but that the entropy of an open quantum system is {\em undefined} in the limit
of zero temperature.
Their approach is inspired by that of Sanchez and coworkers,\cite{Ludovico2014,Ludovico2018}
who argue that the definition
of heat in open quantum systems is ambiguous when
time-dependent driving is present. In Ref.\ \onlinecite{Ludovico2014}, they argue 
that in non-steady states, the tunneling region has some
energy (``energy reactance") and it is unclear whether to
ascribe that to the ``system" or the ``bath." This, they argue, leads to an ambiguity
in the definition of the heat, which they propose is fixed by ascribing
half the energy of the tunneling region to the bath. Although
this definition of heat does agree with the laws of thermodynamics, it %and agrees
%also with the analysis of Nitzan and coworkers [CITE], it
is not clear whether their prescription is applicable to models other than the
one they consider.
%Their approach is inspired by that of David Sanchez and coworkers,\cite{Ludovico2014,Ludovico2018}
%who argue that
%the definition of heat in open quantum systems is ambiguous when time-dependent driving is present.
%In Ref.\ \onlinecite{Ludovico2014}, they argue that in non-steady states, the interaction region has some energy (``energy reactance'') and
%are unsure whether to ascribe that to the ``system'' or the ``bath.'' They therefore divide the energy in
%the tunneling region into two halves: one ascribed to the ``system'' and the other to the ``bath,'' and call the latter heat.
Ref.\ \onlinecite{Ludovico2018} proposes to experimentally
measure this energy reactance in order to determine empirically what partitioning leads to the correct heat current definition.

In Ref.\ \onlinecite{Bruch2016}, Nitzan and collaborators consider a driven resonant level model, and show that
the problem of separately defining ``system'' and ``bath'' in the strong-coupling
regime is circumvented by considering as the system everything that is influenced by the externally driven energy level, and
rebut the claims of a violation of the heat theorem put forward in Ref.\ \onlinecite{Esposito2015}.
Their book-keeping\cite{Bruch2016} is
similar to that originally put forward in the equilibrium case by Friedel.\cite{Friedel1958}
Finally, Ref.\ \onlinecite{Bruch2018} uses a scattering approach to similarly circumvent the problem of system/bath definitions for adiabatically driven 
open quantum systems, expressing changes in the entropy of the system in terms of asymptotic observables at infinity.

In this article, we investigate the applicability of Nernst's heat theorem (A) to a general class of open quantum systems, consisting of
a finite system of independent fermions with arbitrary Hamiltonian coupled to one or more equilibrium fermion reservoirs (which need not be
in equilbrium with each other).  Both the equilibrium case and the case of a nonequilibrium steady state are considered.  We consider a general
partitioning of the entire system into subsystem and reservoir(s), where the subsystem can be any finite subspace of the total Hilbert space.
We show that no ambiguity arises from the partitioning either in equilibrium or in a nonequilibrium steady state, and give proofs of
the heat theorem for both cases.

\section{Theoretical methods}
\label{sec:theory}

We consider a generic open quantum system of independent fermions coupled to one or more macroscopic fermion reservoirs.  The reservoirs are 
separately in thermodynamic equilibrium, but need not be in equilibrium with each other.

\subsection{Hamiltonian}
\label{sec:Hamiltonian}

The Hamiltonian is
\begin{equation}
H=H_{\rm sys} + H_{\rm res} + H_{\rm s-r},
\label{eq:H}
\end{equation}
where 
\begin{equation}
H_{\rm sys}=\sum_{i,j}\left(H_{\rm sys}\right)_{ij} d^\dagger_i d_j
\label{eq:H_sys}
\end{equation} 
is a generic 1-body Hamiltonian for a finite spatial domain, with $\left(H_{\rm sys}\right)^\ast_{ij}=\left(H_{\rm sys}\right)_{ji}$,
\begin{equation}
H_{\rm res} = \sum_{\alpha=1}^M \sum_{k\in\alpha} \varepsilon_k c^\dagger_k c_k
\label{eq:H_res}
\end{equation}
is the Hamiltonian describing $M$ fermion reservoirs, and
\begin{equation}
H_{\rm s-r}=\sum_i \sum_{\alpha=1}^M \sum_{k\in\alpha} \left(V_{ik} d^\dagger_i c_k + {\rm H.c.}\right)
\label{eq:H_sr}
\end{equation}
describes the system-reservoir coupling. Here $d_i$ and $c_k$ are fermion annihilation operators obeying canonical anticommutation relations.
For simplicity and mathematical rigor, we assume that the Hilbert space on which $H_{\rm sys}$ acts is {\em finite}. 

We note that the partitioning of the total system into ``system'' and ``reservoir(s)'' is to some extent arbitrary, except that the system is finite and the
reservoirs are typically taken to be infinite.  The division between system and reservoirs should be understood as a division of the Hilbert space,
not as a division of the Hamiltonian.  The open quantum system, coupled to its reservoir(s), is no longer described a hermitian Hamiltonian, but instead
is described in terms of Green's functions.

\subsection{Green's functions}
\label{sec:NEGF}

The dynamics of the open quantum system 
(\ref{eq:H}) are described by the retarded Green's function \cite{Stefanucci13,Bergfield09}
\begin{equation}
G_{ij}(t) = -i\theta(t) \langle \{d_i(t), d^\dagger_j(0)\}\rangle.
\label{eq:G_def}
\end{equation}
In this article, we consider systems in equilibrium or in a nonequilibrium steady state, so it is useful to work with its Fourier transform $G(\omega)$,
given by\cite{Stefanucci13,Bergfield09}
\begin{equation}
G(\omega) = \left[\mathbb{1}\omega - H_{\rm sys} - \Sigma(\omega)\right]^{-1},
\label{eq:G_result}
\end{equation}
where the retarded self-energy 
\begin{equation}
\Sigma(\omega)\equiv\sigma(\omega) - i\Gamma(\omega)/2 -i \eta \mathbb{1}
\label{eq:Sigma_def}
\end{equation}
accounts for the system-reservoir coupling, with $\sigma(\omega)=
\sigma^\dagger(\omega)$ and $\Gamma(\omega)=\Gamma^\dagger(\omega)$.
Here $\eta=0^+$ is a positive infinitessimal and $\mathbb{1}$ is the unit operator.
The broadening of the states of the system is determined by
\begin{equation}
\Gamma(\omega)=\sum_{\alpha=1}^M \Gamma^\alpha(\omega),
\label{eq:Gamma}
\end{equation}
where 
\begin{equation}
\Gamma_{ij}^\alpha(\omega) = 2\pi \sum_{k\in\alpha} V_{ik} V^\ast_{jk} \delta(\omega-\varepsilon_k)
=2\pi \left.\overline{V_{ik} V^\ast_{jk}}\right|_{k\in\alpha} \rho_\alpha(\omega)
\label{eq:Gamma_alpha}
\end{equation}
is the partial width function due to coupling with reservoir $\alpha$, where $\rho_\alpha(\omega)$ is the density of states of reservoir $\alpha$.
The real part of the shifts of the system energy levels due to coupling with the reservoirs is determined by
\begin{equation}
\sigma_{ij}(\omega) = \sum_{\alpha=1}^M \sum_{k\in\alpha} {\cal P} 
\frac{V_{ik} V^\ast_{jk}}{\omega-\varepsilon_k}, 
\label{eq:sigma}
\end{equation}
where ${\cal P}$ denotes the principal part.

\subsection{Energy spectrum}
\label{sec:spectrum}

The spectral function of the open quantum system is given by\cite{Stefanucci13}
\begin{equation}
A(\omega) = \frac{i}{2\pi}\left[G(\omega)-G^\dagger(\omega)\right],
\label{eq:A_def}
\end{equation}
and may be decomposed as 
\begin{equation}
A(\omega) =\sum_{\alpha=1}^M A_\alpha(\omega) + \sum_\ell |\ell\rangle \langle \ell| \delta(\omega-\omega_\ell), 
\label{eq:A_result}
\end{equation}
where 
\begin{equation}
A_\alpha(\omega) = \frac{1}{2\pi} G(\omega) \Gamma^\alpha(\omega) G^\dagger(\omega)
\label{eq:A_alpha}
\end{equation}
is the partial spectral function due to scattering states incident on the system from reservoir $\alpha$, and the sum over $\ell$ includes any
{\em localized states} that are not broadened due to the coupling with the reservoir(s).  The localized states, if any, satisfy
\begin{equation}
\left[\omega_\ell - H_{\rm sys} -\sigma(\omega_\ell)+i\Gamma(\omega_\ell)/2 \right] |\ell \rangle  =  0.
\label{eq:loc}
\end{equation}
and their energies are denoted by $\omega_\ell$.
A similar condition was recently analyzed in the context of destructive quantum interferences.\cite{Reuter14,Sam_ang_2017}

The density of states of the open quantum system is 
\begin{equation}
g(\omega)=\Tr \{A(\omega)\}= g_{\rm reg}(\omega)
+\sum_\ell \delta(\omega-\omega_\ell), 
\end{equation}
where 
\begin{equation}
g_{\rm reg}(\omega)=\sum_{\alpha=1}^M g_\alpha(\omega)
\label{eq:g_reg}
\end{equation}
is the non-singular part of the spectrum, and\cite{Gasparian96,Gramespacher97,Stafford16} 
\begin{equation}
g_\alpha(\omega) = \Tr \{A_\alpha(\omega)\}
\label{eq:g_alpha}
\end{equation}
is the partial density of states of the system due to scattering states incident on the system from reservoir $\alpha$.
It should be emphasized that an open quantum system is a subsystem of a larger system, and $g(\omega)$ gives the energy spectrum of
the whole system projected onto the Hilbert space of the subsystem.
Other prescriptions for partitioning into subsystem and environment are also possible.\cite{Friedel1958,Bruch2016}

Similarly, the local density of states is 
given by
\begin{eqnarray}
g(\omega;{\bf x}) & = & \langle {\bf x} | A(\omega) | {\bf x}\rangle
\nonumber \\
& = & \sum_{\alpha=1}^M g_\alpha(\omega;{\bf x}) + \sum_\ell |\psi_\ell({\bf x})|^2 \delta(\omega-\omega_\ell),
\label{eq:g_x} 
\end{eqnarray}
where the local partial density of states associated with reservoir $\alpha$ is\cite{Gasparian96,Gramespacher97,Stafford16}
\begin{equation}
g_\alpha(\omega;{\bf x})  =  \langle {\bf x} | A_\alpha(\omega) | {\bf x}\rangle.
\label{eq:g_alpha_x}
\end{equation}

\subsection{Thermodynamics}
\label{sec:thermo}

The grand canonical potential $\Omega=E-TS-\mu N$ of an open quantum system in thermodynamic equilibrium at absolute temperature $T$
and chemical potential $\mu$ is given by\cite{Stafford97a}
\begin{equation}
\Omega(\mu,T) = -k_B T \int d\omega \, g(\omega) \ln \left[1+e^{-\beta\left(\omega-\mu\right)}\right],
\label{eq:Omega}
\end{equation}
where $k_B$ is Boltzmann's constant and $\beta=1/k_B T$.  The entropy of the system is given by
\begin{eqnarray}
S&=&-\left.\frac{\partial \Omega}{\partial T}\right|_\mu
= \int d\omega \, g(\omega) s(f(\omega)) \nonumber \\
&=& \int d\omega \, g_{\rm reg}(\omega) s(f(\omega)) +\sum_\ell s(f(\omega_\ell)),
\label{eq:S_eq}
\end{eqnarray}
where
\begin{equation}
s(f)=-k_B [f\ln f + (1-f)\ln(1-f)]
\label{eq:s}
\end{equation}
and 
\begin{equation}
f(\omega) = [e^{\beta(\omega-\mu)}+1]^{-1}
\label{eq:FD}
\end{equation}
is the equilibrium Fermi-Dirac distribution of the reservoir(s).
Similarly, one can define the local entropy density\cite{Stafford17}
\begin{eqnarray}
S({\bf x}) &=& \int d\omega \, g(\omega; {\bf x}) s(f(\omega)) 
\label{eq:S_eq_x} \\
&=& \sum_{\alpha=1}^M \int d\omega \, g_\alpha(\omega; {\bf x}) s(f(\omega))  + \sum_\ell |\psi_\ell({\bf x})|^2 s(f(\omega_\ell)),
\nonumber
\end{eqnarray}
which satisfies $S=\int_{\rm sys} d^3 x \, S({\bf x})$.

\subsection{Nonequilibrium steady states}
\label{sec:theory_noneq}

The equilibrium entropy formulas (\ref{eq:S_eq}) and (\ref{eq:S_eq_x}) can be generalized to the case of an open quantum system in a nonequilibrium
steady state.\cite{Shastry18,Shastry19}  Succinctly, the nonequilibrium steady state of a quantum scattering problem for independent quantum particles can be
decomposed into independent contributions from the scattering states incident from each reservoir, which coexist in real space, but are orthogonal in 
Hilbert space.
The nonequilibrium entropy of the open quantum system is\cite{Shastry18,Shastry19}
\begin{equation}
S= \sum_{\alpha=1}^M \int d\omega \, g_\alpha(\omega) s(f_\alpha(\omega)) + \sum_\ell s(f_\ell),
\label{eq:S_noneq}
\end{equation}
where $f_\alpha(\omega) = [e^{\beta_\alpha(\omega-\mu_\alpha)}+1]^{-1}$ is the Fermi-Dirac distribution of reservoir $\alpha$ and $f_\ell$ is the
occupancy of the $\ell$th localized state.
The local nonequilibrium entropy density is\cite{Shastry18,Shastry19}
\begin{equation}
S({\bf x}) = \sum_{\alpha=1}^M \int d\omega \, g_\alpha(\omega; {\bf x}) s(f_\alpha(\omega)) + \sum_\ell |\psi_\ell({\bf x})|^2 s(f_\ell),
\label{eq:S_noneq_x}
\end{equation}
which satisfies $S=\int_{\rm sys} d^3 x \, S({\bf x})$, with $S$ the nonequilibrium entropy given by Eq.\ (\ref{eq:S_noneq}).

\section{3rd Law for equilibrium systems}
\label{sec:equilibrium}

\begin{theorem}[3rd law of thermodynamics]
For an open quantum system with a finite-dimensional Hilbert space,
\begin{equation}
\lim_{T\rightarrow 0} S(\mu,T)=0 
\label{eq:3rd_law}
\end{equation}
almost everywhere for $\mu\in \mathcal{R}$.
\label{thm:3rd_law}
\end{theorem}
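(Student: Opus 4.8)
The plan is to start from the closed-form expression for the equilibrium entropy, Eq.~(\ref{eq:S_eq}),
\begin{equation}
S(\mu,T) = \int d\omega \, g_{\rm reg}(\omega)\, s(f(\omega)) + \sum_\ell s(f(\omega_\ell)),
\nonumber
\end{equation}
and to treat the regular (broadened) contribution and the localized contribution separately. For the localized sum, note that $s(f)$ is bounded by $k_B\ln 2$ and vanishes whenever $f\to 0$ or $f\to 1$; since there are only finitely many localized states (the Hilbert space is finite-dimensional), as $T\to0$ each term $s(f(\omega_\ell))$ tends to $0$ unless $\omega_\ell=\mu$ exactly, in which case $f(\omega_\ell)=1/2$ and the term tends to $k_B\ln 2$. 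The set of $\mu$ for which some $\omega_\ell=\mu$ is finite, hence of measure zero in $\mathcal{R}$; this is precisely the ``almost everywhere'' qualifier and the source of the $g\ln 2$ behavior mentioned in the abstract. So away from that exceptional set the localized sum vanishes in the limit.

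The main work is the regular term. Here I would use the fact that $s(f(\omega))$ is a positive function, sharply peaked in a window of width $O(k_BT)$ around $\omega=\mu$, with $\int d\omega\, s(f(\omega)) = \tfrac{\pi^2}{3}k_B^2 T$ (the standard Sommerfeld-type integral). Thus $S_{\rm reg}(\mu,T) \le \big(\sup_{|\omega-\mu|\lesssim k_BT} g_{\rm reg}(\omega)\big)\cdot \tfrac{\pi^2}{3}k_B^2 T$, and more carefully one writes $S_{\rm reg} = \int d\omega\, g_{\rm reg}(\mu + k_B T u)\, s(f(\mu + k_BT u))$ and takes $T\to 0$. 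The key analytic input is that $g_{\rm reg}(\omega) = \sum_\alpha \Tr\{A_\alpha(\omega)\}$ is \emph{locally bounded} (indeed, for fixed $\mu$ it is bounded on a neighborhood of $\mu$): from Eq.~(\ref{eq:A_alpha}), $A_\alpha(\omega) = \tfrac{1}{2\pi} G(\omega)\Gamma^\alpha(\omega)G^\dagger(\omega)$, and on the finite-dimensional Hilbert space $\|G(\omega)\|$, $\|\Gamma^\alpha(\omega)\|$ need be controlled only near $\omega=\mu$. One must argue that $g_{\rm reg}$ cannot have a non-integrable singularity at a generic $\mu$: any divergence of $G(\omega)$ would signal a pole of the resolvent, i.e.\ a localized (unbroadened) state, which has already been extracted into the $\delta$-function part; at all other energies $G(\omega)$ is finite, so $g_{\rm reg}$ is finite there. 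Combined with the $O(k_BT)$ weight of $s(f)$, dominated convergence then gives $S_{\rm reg}(\mu,T)\to 0$.

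I expect the main obstacle to be making the boundedness/integrability of $g_{\rm reg}$ near $\mu$ rigorous, since $\Gamma^\alpha(\omega)$ involves $\rho_\alpha(\omega)$ which need not be continuous or bounded for an arbitrary reservoir, and the principal-part integral defining $\sigma(\omega)$ could in principle be singular. The clean way around this is to invoke the decomposition (\ref{eq:A_result}): the total spectral function $A(\omega)$ is a positive operator whose trace integrates to the (finite) Hilbert-space dimension $N$, so $\int d\omega\, g(\omega) = N < \infty$; hence $g_{\rm reg}(\omega) = g(\omega) - \sum_\ell \delta(\omega-\omega_\ell)$ is a nonnegative, locally integrable function of $\omega$. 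For the entropy limit one then needs a Lebesgue-point / approximate-identity argument: $s(f(\mu+k_BTu))/(\tfrac{\pi^2}{3}k_B^2T)$ is an approximate identity concentrating at $\mu$, so $S_{\rm reg}(\mu,T)/T \to \tfrac{\pi^2}{3}k_B^2\, g_{\rm reg}(\mu)$ at every Lebesgue point $\mu$ of $g_{\rm reg}$, i.e.\ for almost every $\mu$, and in particular $S_{\rm reg}(\mu,T)\to 0$ there. Putting the two pieces together yields $\lim_{T\to0}S(\mu,T)=0$ for all $\mu$ outside the finite set $\{\omega_\ell\}$ together with the measure-zero set of non-Lebesgue points of $g_{\rm reg}$, which is the claimed ``almost everywhere'' statement. (If one additionally assumes the $\Gamma^\alpha$ are continuous and $H_{\rm sys}$ has no states at $\mu$, the exceptional set is just $\{\omega_\ell\}$ and the convergence is everywhere on its complement.)
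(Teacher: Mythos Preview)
Your proposal is correct, and your handling of the localized-state contribution coincides with the paper's. For the regular part, though, the paper takes a more elementary route than your Lebesgue-point/approximate-identity argument: it simply uses the two facts you already identified, namely that $g_{\rm reg}\ge 0$ with $\int g_{\rm reg}(\omega)\,d\omega = N_{\rm reg}\le N_{\mathcal H}<\infty$ and that $0\le s(f(\omega))\le k_B\ln 2$ with $s(f(\omega))\to 0$ pointwise as $T\to 0$ for every $\omega\ne\mu$. Dominated convergence (with integrable majorant $k_B\ln 2\,g_{\rm reg}$) then gives $S_{\rm reg}(\mu,T)\to 0$ for \emph{every} $\mu\in\mathcal R$, so the ``almost everywhere'' caveat arises solely from the finitely many localized energies $\{\omega_\ell\}$. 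Your approach is also valid and has the advantage of delivering the Sommerfeld asymptotic $S_{\rm reg}\sim\tfrac{\pi^2}{3}k_B^2 T\,g_{\rm reg}(\mu)$ along the way (the paper records this separately as Eq.~(\ref{eq:S_fully})), but it deploys more machinery than the bare limit requires and introduces the non-Lebesgue points of $g_{\rm reg}$ as a second, unnecessary exceptional set that the paper's simpler argument avoids.
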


\begin{proof}
We consider the first term on the rhs of Eq.\ (\ref{eq:S_eq}):
\begin{equation}
\lim_{T\to 0}S_{\rm{reg}}(\mu,T) =\lim_{T\to0}\int_{-\infty}^{\infty}\, d\omega g_{\rm{reg}}(\omega) s(f(\omega)),
\label{eq:S_eq_lim}
\end{equation}
and note that 
\begin{equation}
\int_{-\infty}^{\infty}\, d\omega\, g_{\rm{reg}}(\omega) = N_{\rm{reg}} \leq N_{\cal H},
\label{eq:g_reg_integral}
\end{equation}
where $N_{\cal H}=\dim\{\cal H\}$ is the dimension of the Hilbert space ${\cal H}$ of the system. 
The Fermi function $\lim_{T\to0}f(\omega)\rightarrow1- \Theta(\omega-\mu)$, where $\Theta$
is the Heaviside step function.
Therefore, 
\begin{equation}
\begin{aligned}
\lim_{T\to 0}S_{\rm{reg}}(\mu,T) =&\lim_{f\to1}\int_{-\infty}^{\mu}\, d\omega\, g_{\rm{reg}}(\omega) s(f)\\
&+ \lim_{f\to0}\int_{\mu}^{\infty}\, d\omega\, g_{\rm{reg}}(\omega) s(f)\\
=&\, 0,
\label{eq:S_eq_lim2}
\end{aligned}
\end{equation}
since $\lim_{f\to1}s(f)=\lim_{f\to0}s(f)=0$ and the integral of $g(\omega)$ is bounded by the dimension of the Hilbert space 
[Eq.\ \ref{eq:g_reg_integral}].
A similar result, restricted to the {\em resonant level model}, was derived in Ref.\ \onlinecite{Bruch2016}.

The second term from Eq.\ (\ref{eq:S_eq}) has the entropy contribution from the localized states which vanish as $T\to0$ when 
$\mu\neq\omega_{\ell}$ since
\begin{equation}
\lim_{f\to0}s(f(\omega_{\ell}))= \lim_{f\to1}s(f(\omega_{\ell}))=0,
\end{equation}
and when $\mu=\omega_{\ell}$ we get
\begin{equation}
S_{\rm{loc}}=\lim_{T\to0}s(f(\mu=\omega_{\ell}))= k_{B}\log(2),
\end{equation}
and if there are multiple localized states at $\omega=\omega_{\ell}$, we denote the degeneracy as $g_{\ell}$ and may write
\begin{equation}
S_{\rm{loc}}=g_{\ell}k_{B}\log(2).
\end{equation}
The points $\mu=\omega_{\ell}$ have zero measure for $\mu\in\mathcal{R}$ and this completes the proof.

\end{proof}

Theorem \ref{thm:3rd_law} constitutes the general form of the {\em third law of thermodynamics} for open quantum systems, and is the central 
result of this paper. 
Figure \ref{fig:3rd_law} illustrates the behavior of $S(\mu,T)$ as $T\rightarrow 0$ for a model open quantum system consisting of a benzene
ring coupled to an electron reservoir.  

\begin{figure}[htb]
        \includegraphics[width=8.0cm]{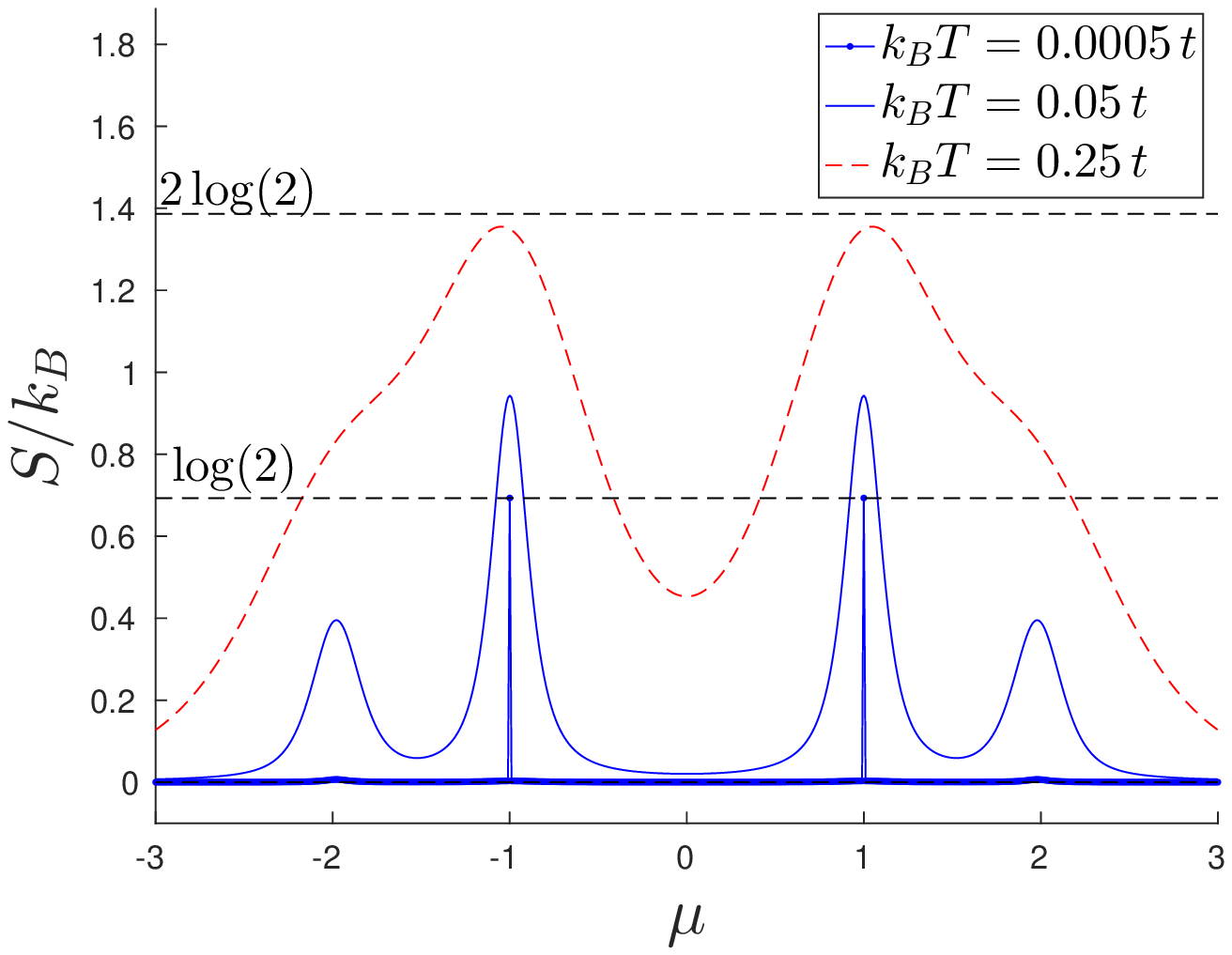}
        \includegraphics[width=8.0cm]{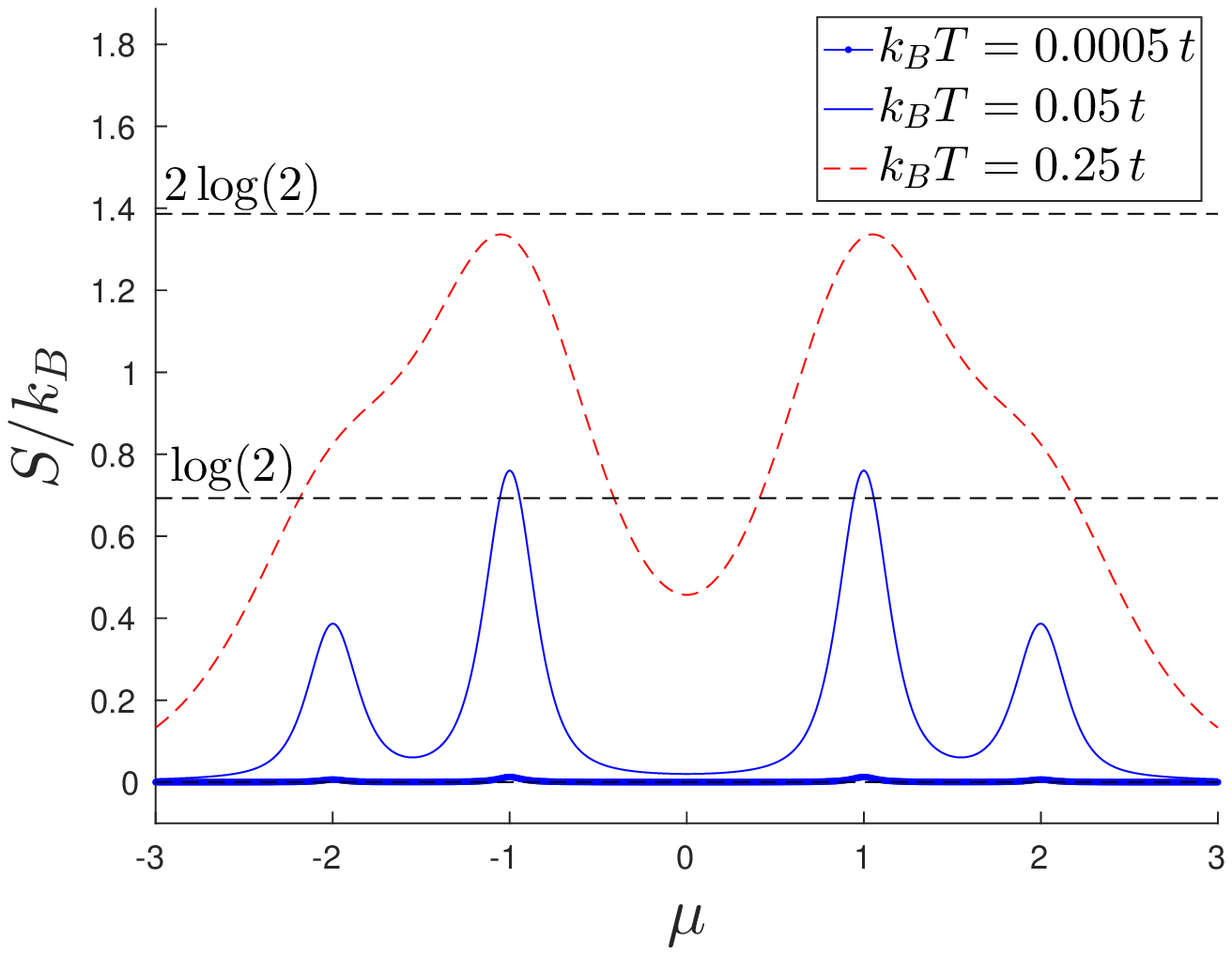}
\caption{
Entropy $S$ of an open quantum system consisting of a benzene ring coupled to an equilibrium electron reservoir, 
plotted as a function of the chemical potential $\mu$ of the reservoir for several temperatures.  The molecule is modelled using H\"uckel theory 
(tight-binding approximation, see Appendix \ref{sec:benzene}) and energies are expressed in units of the nearest-neighbor hybridization $t$. 
Top panel: Generic open system with $\Gamma_{11}=t$ and all other matrix elements of $\Gamma$ zero, illustrating the effect of the localized states
at $\mu/t=\pm 1$. Bottom panel: Fully open quantum system with $\Gamma=(t/6)\mathbb{1}$, illustrating the strong form of the third law, $S(T)\rightarrow 0$ as
$T\rightarrow 0$ $\forall \mu$.
}
\label{fig:3rd_law}
\end{figure}

\subsection{Fully open quantum systems}
\label{sec:fully}

A strong form of the third law of thermodynamics can be shown to hold for {\em fully open} quantum systems.
\begin{definition}
A fully open quantum system is any system for which Eq.\ (\ref{eq:loc}) has no solutions for $\omega_\ell \in {\cal R}$.
\label{def:fully}
\end{definition}
\noindent
In a fully open quantum system, all of the energy eigenstates of the system are broadened due to coupling to the reservoir; there are no localized states.

\begin{corollary}
For a fully open quantum system with a finite-dimensional Hilbert space,
\begin{equation}
\lim_{T\rightarrow 0} S(\mu,T)=0 \ \,  \forall \, \mu.
\label{eq:3rd_law_fully}
\end{equation}
\label{cor:3rd_law_fully}
\end{corollary}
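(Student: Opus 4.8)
The plan is to obtain Corollary~\ref{cor:3rd_law_fully} as an immediate consequence of Theorem~\ref{thm:3rd_law} by identifying precisely the measure-zero exceptional set that appears there. First I would invoke Definition~\ref{def:fully}: for a fully open system Eq.~(\ref{eq:loc}) has no solutions with $\omega_\ell\in{\cal R}$, so the sum over localized states in the entropy decomposition~(\ref{eq:S_eq}) is empty and $S(\mu,T)=S_{\rm reg}(\mu,T)=\int d\omega\, g_{\rm reg}(\omega)\, s(f(\omega))$ holds identically. In the proof of Theorem~\ref{thm:3rd_law}, every value of $\mu$ for which the limit failed to vanish was one of the localized-state energies $\omega_\ell$; with that set now empty, there is nothing left to exclude.

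The second step is to observe that the argument establishing $\lim_{T\to0}S_{\rm reg}(\mu,T)=0$ in the proof of Theorem~\ref{thm:3rd_law} never used any restriction on $\mu$. Indeed, the bound $\int d\omega\, g_{\rm reg}(\omega)=N_{\rm reg}\leq N_{\cal H}$ holds for all $\mu$, the pointwise limit $s(f(\omega))\to0$ holds at every $\omega\neq\mu$, and $|s(f)|\leq k_B\ln 2$ uniformly; splitting the integral at $\omega=\mu$ as in Eq.~(\ref{eq:S_eq_lim2}), or equivalently applying dominated convergence with dominating function $k_B\ln 2\cdot g_{\rm reg}\in L^1$, then gives $\lim_{T\to0}S_{\rm reg}(\mu,T)=0$ for \emph{every} $\mu$. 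Combined with the first step this is exactly Eq.~(\ref{eq:3rd_law_fully}).

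The only point deserving care --- and the mildest of obstacles --- is verifying that $g_{\rm reg}$ carries no weight at the single excluded point $\omega=\mu$, so that its contribution to the limit is genuinely zero for all $\mu$. This is guaranteed by the exact decomposition~(\ref{eq:A_result}): in a fully open system $\det[\mathbb{1}\omega-H_{\rm sys}-\Sigma(\omega)]$ has no real zeros, so $G(\omega)$, and hence each $A_\alpha(\omega)=(2\pi)^{-1}G(\omega)\Gamma^\alpha(\omega)G^\dagger(\omega)$, is a bounded and integrable function on the real axis (given a smooth reservoir density of states $\rho_\alpha$). Thus $g_{\rm reg}=\sum_\alpha\Tr\{A_\alpha\}$ is atomless with total weight $N_{\rm reg}\leq N_{\cal H}$, dominated convergence applies at every $\mu$ without exception, and the corollary follows.
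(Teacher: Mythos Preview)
Your proposal is correct and follows essentially the same route as the paper: invoke Definition~\ref{def:fully} to remove the localized-state contribution, then note that the argument culminating in Eq.~(\ref{eq:S_eq_lim2}) already establishes $\lim_{T\to0}S_{\rm reg}(\mu,T)=0$ without any restriction on $\mu$. Your additional care in justifying dominated convergence and the atomlessness of $g_{\rm reg}$ is more explicit than the paper's terse proof, but the logical structure is the same.
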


\begin{proof}

For a fully open system, Eq.\ (\ref{eq:g_reg_integral}) holds with $N_{\rm{reg}}=N_{\cal H}$ and there are
no localized states. 
Proof follows directly from Eq.\ (\ref{eq:S_eq_lim2}) and theorem \ref{thm:3rd_law} holds $\forall\, \mu\in\mathcal{R}$.
\end{proof}

Since $S$ is a smooth function of $\mu\in\mathcal{R}$ and $T\in(0,\infty)$,
we provide the leading-order low-temperature expansion:\cite{Shastry15}
\begin{equation}
S(\mu,T) 
=\frac{\pi^2}{3} g(\mu) k_B^2 T.
\label{eq:S_fully}
\end{equation}
$S$ scales as a higher power of $T$ as $T\rightarrow 0$ if $g(\mu)=0$.

\subsubsection{Sufficient condition for a fully open quantum system}
\label{sec:suffic_fully}

A sufficient condition for a fully open quantum system is
\begin{equation}
\Gamma(\omega) |\psi\rangle \neq 0 \;\;\; \forall \, |\psi\rangle \in {\cal H},
\label{eq:fully1}
\end{equation}
where ${\cal H}$ denotes the Hilbert space of $H_{\rm sys}$.
Eq.\ (\ref{eq:fully1}) holds if $\rank\{\Gamma\} = \dim{\cal H}$ since $\Gamma \geq 0$.\cite{Shastry16}
Figure \ref{fig:3rd_law}(lower panel) illustrates the case of a fully open quantum system, for the same model of a benzene ring coupled to a reservoir,
but with $\Gamma=\gamma \mathbb{1}$, so that $\rank\{\Gamma\}=6$.  Note the absence of any localized states in the entropy spectrum.

\subsubsection{Example}
\label{sec:example_fully}

Eq.\ (\ref{eq:fully1}) is a sufficient condition for a fully open quantum system, but is not a necessary condition.  To see that this is the case, 
consider the following example.  Let the system-reservoir coupling be in the broad-band limit so that $\sigma(\omega)=0$ and $\Gamma(\omega)=\mbox{const.}$
Let $|\nu\rangle$ be an eigenstate of $H_{\rm sys}$, $H_{\rm sys}|\nu\rangle=\varepsilon_\nu |\nu\rangle$, and suppose
\begin{equation}
\langle \nu|\Gamma|\nu\rangle \neq 0 \;\;\; \forall \, \nu.
\label{eq:example1}
\end{equation}
Then it is straightforward to show that there exist no solutions to Eq.\ (\ref{eq:loc}) for real energies, provided the eigenenergies $\varepsilon_\nu$
of $H_{\rm sys}$ are nondegenerate.  The following rank-1 $\Gamma$ matrix is an example leading to a fully open quantum system with a nondegenerate
spectrum:
\begin{equation}
\Gamma=\gamma|\gamma\rangle\langle \gamma|, \; \mbox{where} \; |\gamma\rangle=\sum_\nu C_\nu |\nu\rangle,
\label{eq:Gamma_example}
\end{equation}
and $C_\nu \neq 0$ are complex numbers.

However, if the energy eigenvalue $\varepsilon_\nu$ is degenerate and the support of $\Gamma$ does not span the degenerate subspace, 
then there will be at least one localized state of energy $\varepsilon_\nu$.  This case is illustrated in Fig.\ \ref{fig:3rd_law}(upper panel).

\section{3rd Law for Nonequilibrium systems}
\label{sec:noneq}

For nonequilibrium systems, one also needs the Keldysh ``lesser'' Green's function\cite{Stefanucci13,Stafford17}
\begin{equation}
G^<_{ij}(t)=i\langle d^\dagger_j(0) d_i(t)\rangle,
\label{eq:G<_def}
\end{equation}
which determines the occupancies of the states out of equilibrium.  Its Fourier transform obeys the Keldysh equation\cite{Stefanucci13}
\begin{equation}
G^<(\omega) = G(\omega)\Sigma^<(\omega) G^\dagger(\omega),
\label{eq:Keldysh}
\end{equation}
where the lesser self-energy is given by
\begin{equation}
\Sigma^<(\omega)=i\sum_{\alpha=1}^M \Gamma^\alpha(\omega) f_\alpha(\omega) + 2i\eta \sum_\ell |\ell\rangle \langle \ell| f_\ell,
\label{eq:Sigma<}
\end{equation}
where $f_\alpha(\omega)$ and $f_\ell$ are the Fermi-Dirac distribution of reservoir $\alpha$ and the nonequilibrium occupancy of the $\ell$th
localized state, respectively, as discussed in Sec.\ \ref{sec:theory_noneq}.
Inserting Eq.\ (\ref{eq:Sigma<}) in Eq.\ (\ref{eq:Keldysh}), and using Eq.\ (\ref{eq:A_alpha}), one finds
\begin{equation}
G^<(\omega)=2\pi i \left[\sum_{\alpha=1}^M A_\alpha(\omega)f_\alpha(\omega) + \sum_\ell f_\ell |\ell\rangle \langle \ell| 
\delta(\omega-\omega_\ell)\right].
\label{eq:G<_result}
\end{equation}

The particle density $N({\bf x})$ and energy density $E({\bf x})$ of the nonequilibrium quantum system are given by\cite{Stafford17}
\begin{equation}
N({\bf x}) 
= \int d\omega \, g(\omega; {\bf x}) f(\omega,{\bf x}),
\label{eq:Nofx}
\end{equation}
\begin{equation}
E({\bf x}) 
= \int d\omega \, g(\omega; {\bf x}) \omega f(\omega,{\bf x}),
\label{eq:Eofx}
\end{equation}
where 
\begin{eqnarray}
f(\omega;{\bf x}) & \equiv & \frac{\langle {\bf x} | G^<(\omega)|{\bf x}\rangle}{2\pi i g(\omega;{\bf x})}
\label{eq:f_noneq} \\
& = & \left\{\begin{array}{lr} f_\ell, & \omega=\omega_\ell \\ 
\frac{1}{g(\omega;{\bf x})}\sum_\alpha g_\alpha(\omega; {\bf x}) f_\alpha(\omega), & \omega \neq \omega_\ell \end{array}\right.
\nonumber
\end{eqnarray}
is the local nonequilibrium distribution function of the system.\cite{Stafford17,Ness14}
Note that $f(\omega;{\bf x})$ may be discontinuous at $\omega=\omega_\ell$.
Eq.\ (\ref{eq:Nofx}) holds quite generally, while Eq.\ (\ref{eq:Eofx}) holds for the case of independent fermions.

\subsection{Local thermodynamic variables $T({\bf x})$, $\mu({\bf x})$}
\label{eq:local_var}

In order to address the applicability of the 3rd law of thermodynamics in systems out of equilibrium, one needs 
a concept of local temperature.\cite{DiVentra09,Jacquet2012,Bergfield2013demon,Meair14,Shastry15,Bergfield15,Stafford16,Shastry16,Stafford17}
Here, we define the local temperature $T({\bf x})$ and chemical potential $\mu({\bf x})$ as those measured by 
a floating broad-band thermoelectric probe coupled weakly to the system at the point ${\bf x}$ 
(see Refs.\ \onlinecite{Stafford16,Shastry16,Stafford17} for discussion).
The floating probe consists of an equilibrium electron reservoir whose temperature and chemical potential are adjusted such that the net flow of
charge and heat into the probe is zero.  Under these conditions, the Fermi-Dirac distribution of the probe reservoir $f_p(\omega;{\bf x})$ defines
$T({\bf x})$ and $\mu({\bf x})$ via
\begin{equation}
f_p(\omega;{\bf x}) = \left(e^{\beta({\bf x})[\omega - \mu({\bf x)}]}+1\right)^{-1}.
\label{eq:f_p}
\end{equation}
$f_p(\omega;{\bf x})$ is the unique\cite{Shastry16} {\em equilibrium distribution} that reproduces the particle and energy densities of
the nonequilbrium system given by Eqs.\ (\ref{eq:Nofx}) and (\ref{eq:Eofx}), respectively.\cite{Stafford16,Stafford17}

\begin{definition}
Let $S_p({\bf x})$ be the entropy density 
of the fictitious local equilibrium state obtained by inserting the probe distribution function
$f_p(\omega;{\bf x})$, Eq.\ (\ref{eq:f_p}), into the equilibrium formula, Eq.\ (\ref{eq:S_eq_x}).  
\label{def:S_p}
\end{definition}
\noindent
$S_p({\bf x})$ so defined clearly satisfies Theorem \ref{thm:3rd_law}.

\begin{definition}
Let $S_s({\bf x})$ be a measure of the local nonequilibrium entropy density of the system defined by inserting the local nonequilibrium distribution function
$f(\omega;{\bf x})$, Eq.\ (\ref{eq:f_noneq}), directly into the equilibrium formula, Eq.\ (\ref{eq:S_eq_x}).
\label{def:S_s}
\end{definition}
\noindent
$S_s({\bf x})$ so defined 
was introduced previously in Ref.\ \onlinecite{Stafford17}, and 
may be thought of as the local entropy density
inferred by an observer with strictly local knowledge of the nonequilibrium state of the system.\cite{Shastry18,Shastry19}
\\

%\begin{lemma}[Maximum entropy principle] For a fully open quantum system
\noindent
{\bf Maximum entropy principle.} %For a fully open quantum system
$S_s({\bf x}) \leq S_p({\bf x}).$
%\label{lemma:max_ent}
%\end{lemma}
\\

%\begin{proof}
\noindent
The fictitious local equilibrium distribution $f_p(\omega;{\bf x})$ satisfies the same constraints, Eqs.\ (\ref{eq:Nofx}) and (\ref{eq:Eofx}), as the actual
nonequilibrium distribution $f(\omega;{\bf x})$. 
%$f(\omega;{\bf x})$ is a smooth function for a fully open quantum system.
Therefore, by the {\em maximum entropy principle}, $S_p({\bf x}) \geq S_s({\bf x})$.
For a detailed variational argument for the case of a fully open quantum system, see Ref.\ \onlinecite{Stafford17}.
%\end{proof}

\begin{theorem}[3rd law for nonequilibrium systems]
For an open quantum system in a nonequilibrium steady state, and with a finite-dimensional Hilbert space,
\begin{equation}
S({\bf x}) \rightarrow 0 \; \mbox{as} \;\, T({\bf x}) \rightarrow 0
\label{eq:3rd_law_noneq}
\end{equation}
almost everywhere in space.
\end{theorem}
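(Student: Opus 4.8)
The plan is to trap the nonequilibrium entropy density $S({\bf x})$ between $0$ and the probe-equilibrium density $S_p({\bf x})$, and then invoke the equilibrium result, Theorem~\ref{thm:3rd_law}. First I would establish the chain
\begin{equation}
0 \le S({\bf x}) \le S_s({\bf x}) \le S_p({\bf x}).
\end{equation}
The lower bound is immediate, since $s(f)\ge 0$ on $[0,1]$ while $g_\alpha(\omega;{\bf x})\ge 0$ and $|\psi_\ell({\bf x})|^2\ge 0$. The first upper bound follows from concavity of $s$: for $\omega\neq\omega_\ell$ the local distribution $f(\omega;{\bf x})$ of Eq.~(\ref{eq:f_noneq}) is the convex combination $\sum_\alpha [g_\alpha(\omega;{\bf x})/g(\omega;{\bf x})]\,f_\alpha(\omega)$, so Jensen's inequality gives $g(\omega;{\bf x})\,s(f(\omega;{\bf x}))\ge\sum_\alpha g_\alpha(\omega;{\bf x})\,s(f_\alpha(\omega))$; integrating over $\omega$, and noting that the localized-state contributions $\sum_\ell|\psi_\ell({\bf x})|^2 s(f_\ell)$ are identical in $S_s({\bf x})$ [Definition~\ref{def:S_s}, using $f(\omega_\ell;{\bf x})=f_\ell$] and in $S({\bf x})$ [Eq.~(\ref{eq:S_noneq_x})], one obtains $S({\bf x})\le S_s({\bf x})$. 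The second upper bound is the maximum entropy principle stated above, valid because $f_p(\omega;{\bf x})$ and $f(\omega;{\bf x})$ reproduce the same $N({\bf x})$ and $E({\bf x})$.

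Second, I would transfer the equilibrium law to $S_p({\bf x})$. Since $S_p({\bf x})$ is an equilibrium entropy density with distribution $f_p$ at temperature $T({\bf x})$ and chemical potential $\mu({\bf x})$, the argument proving Theorem~\ref{thm:3rd_law}, applied pointwise --- legitimate because $\int_{\rm sys}d^3x\int d\omega\, g(\omega;{\bf x})=N_{\cal H}<\infty$, so $g(\omega;{\bf x})$ is $\omega$-integrable for a.e.~${\bf x}$, with $s\le k_B\ln 2$ furnishing a dominating function --- gives $S_p({\bf x})\to 0$ as $T({\bf x})\to 0$ whenever $\mu({\bf x})\neq\omega_\ell$ for every $\ell$; if instead $\mu({\bf x})=\omega_\ell$ one is left with the residual $\sum_{\ell':\,\omega_{\ell'}=\omega_\ell}|\psi_{\ell'}({\bf x})|^2 k_B\ln 2$. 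Note that $S_p({\bf x})$ depends on the reservoir configuration only through $T({\bf x})$ and $\mu({\bf x})$, so this step does not require the individual reservoir temperatures $T_\alpha$ to vanish. Combined with the sandwich, $S({\bf x})\to 0$ as $T({\bf x})\to 0$ at every point where $\mu({\bf x})$ stays away from all $\omega_\ell$ in the limit.

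Third, I would argue that the exceptional set is negligible. By the sandwich, the conclusion can fail at a point ${\bf x}$ only if $T({\bf x})$ can be driven to $0$ while $\mu({\bf x})$ approaches some localized energy $\omega_\ell$; since there are finitely many $\omega_\ell$ and $\mu({\bf x})$ varies smoothly over the spatial support of the system, this locus is a finite union of level sets and hence has measure zero, barring the nongeneric possibility that $\mu({\bf x})$ is pinned to an $\omega_\ell$ on a set of positive measure. On that exceptional locus the limit of $S({\bf x})$ need not vanish, but is a localized-state residual bounded above by $\sum_{\ell'}|\psi_{\ell'}({\bf x})|^2 k_B\ln 2$, consistent with $S_p({\bf x})$. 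The squeeze theorem then yields Eq.~(\ref{eq:3rd_law_noneq}) almost everywhere in space.

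The main obstacle is not the squeeze itself but securing its two ingredients in full generality. (i) The maximum entropy principle $S_s({\bf x})\le S_p({\bf x})$ must be extended beyond the fully open case of Ref.~\onlinecite{Stafford17}: one needs a variational argument maximizing $\int d\omega\, g(\omega;{\bf x})\,s(\tilde f(\omega))$ over distributions $\tilde f$ allowed to jump at the $\omega_\ell$, subject to the two linear constraints fixing $N({\bf x})$ and $E({\bf x})$, together with a check that the Lagrange-multiplier solution is precisely the Fermi-Dirac form $f_p$. (ii) One must control the limiting behaviour of $\mu({\bf x})$: it is not enough that $\mu({\bf x})\neq\omega_\ell$ at finite $T({\bf x})$; the degenerate scenario in which $\mu({\bf x})$ drifts toward some $\omega_\ell$ on the scale $k_B T({\bf x})$ --- leaving a $\ln 2$ residue in $S_p({\bf x})$ --- must be ruled out for almost every ${\bf x}$.
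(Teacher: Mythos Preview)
Your proposal is correct and follows essentially the same approach as the paper: the sandwich $0\le S({\bf x})\le S_s({\bf x})\le S_p({\bf x})$, with the first nontrivial inequality coming from concavity of $s$ (the localized-state contributions being identical in $S$ and $S_s$) and the second from the maximum entropy principle, after which the equilibrium third law applied to $S_p({\bf x})$ finishes the argument. Your treatment is in fact more scrupulous than the paper's---you make explicit the dominated-convergence justification, the level-set argument for the exceptional locus, and the two potential gaps (extending the variational principle beyond the fully open case and controlling the drift of $\mu({\bf x})$ toward $\omega_\ell$)---whereas the paper simply asserts the maximum entropy principle and invokes continuity of $\mu({\bf x})$ in passing.
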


\begin{proof}
$S({\bf x})$ is defined by Eq.\ (\ref{eq:S_noneq_x}).
The contributions of any localized states to $S({\bf x})$ and $S_s({\bf x})$ are identical.  
Therefore, any difference between $S({\bf x})$ and $S_s({\bf x})$ is due to scattering states.
From the concavity of the function $s(f)$ defined by Eq.\ (\ref{eq:s}), it follows that 
$S_s({\bf x}) \geq S({\bf x})$, because
\begin{equation}
s(f)\geq \sum_{\alpha=1}^M \lambda_\alpha s(f_\alpha),
\label{eq:concave}
\end{equation}
where
\begin{equation}
f(\omega;{\bf x}) = \sum_{\alpha=1}^M \lambda_\alpha(\omega;{\bf x}) f_\alpha(\omega), \;\;\; \sum_{\alpha=1}^M \lambda_\alpha =1,
\label{eq:fofx2}
\end{equation}
and $\lambda_\alpha(\omega;{\bf x}) =g_\alpha(\omega;{\bf x})/g(\omega;{\bf x}) \geq 0$.
But $S_s({\bf x})\leq S_p({\bf x})$ by the maximum entropy principle. %Lemma \ref{lemma:max_ent}.  
Therefore, $S({\bf x})\leq S_p({\bf x})$.
\end{proof}
Finally, the limiting behavior of $S_p({\bf x})$ can be obtained from the Sommerfeld expansion of Eq.\ (\ref{eq:S_eq_x}),
\begin{eqnarray}
S_p({\bf x}) & \stackrel{T({\bf x})\rightarrow 0}{\sim} & 
\frac{\pi^2}{3} 
k_B^2 T({\bf x})
\sum_{\alpha=1}^M g_{\alpha}(\mu({\bf x}); {\bf x}) \nonumber \\
& & +\sum_\ell |\psi_\ell({\bf x})|^2 s(f_p(\omega_\ell;{\bf x})),
\label{eq:S_p_lim}
\end{eqnarray}
provided $\sum_{\alpha} g_{\alpha}(\mu({\bf x}); {\bf x})\neq 0$, and if it is zero, the first term on the r.h.s.\ vanishes as a higher power of
$T({\bf x})$.
Therefore,
\begin{equation}
\lim_{T({\bf x})\rightarrow 0} S({\bf x}) = \left\{ \begin{array}{lr} 0, & \mu({\bf x}) \neq \omega_\ell, \\
k_B \ln 2 \sum_\ell |\psi_\ell({\bf x})|^2, & \mu({\bf x}) = \omega_\ell. \end{array} \right.
\label{eq:S_noneq_limit}
\end{equation}
In a nonequilibrium system, where $\mu({\bf x})$ varies as a continuous function of position,\cite{Buttiker89,Bergfield14}
the contribution from localized states in Eq.\ (\ref{eq:S_noneq_limit}) vanishes everywhere in space except on a set of measure zero.

\section{Conclusions}
\label{sec:conclude}

The third law of thermodynamics (Nernst's heat theorem) is shown to hold true for open quantum systems, both in and out of equilibrium.  
This is proven for a generic open quantum system of independent fermions with strong coupling to reservoirs but no 
time-dependent external driving.

Our analysis of the third law was shown to hold for quite general partitioning of Hilbert space into an open subsystem and an environment (reservoir(s)).
It is important to emphasize that such a partitioning is well defined in Hilbert space, but that physical observables relevant for thermodynamics 
may involve both system and reservoir degrees of freedom.\cite{Bruch2016}

The laws of thermodynamics are arguably the most fundamental of all physical principles, more general than any particular theory, such as quantum mechanics
or general relativity. For this reason,
claims of violations of the laws of thermodynamics in open and/or nonequilibrium quantum systems
should be treated with appropriate skepticism, and subjected to the most stringent examination before they are taken at face value.

\begin{acknowledgments}
This work was
supported by the U.S.\ Department of Energy
(DOE), Office of Science under Award No.\ DE-SC0006699.
\end{acknowledgments}

\appendix

\section{Benzene molecular junction}
\label{sec:benzene}

As a specific example of an open quantum system, a molecular junction consisting of a single benzene molecule
coupled to a macroscopic electron reservoir %(see Fig.\ \ref{fig:benzene}) 
is analyzed in Fig.\ \ref{fig:3rd_law}.  The molecule is modelled using H\"uckel theory, with
Hamiltonian
\begin{equation}
H_{\rm sys} = -\sum_{j=1}^6 \left(t e^{i\pi\Phi/3\phi_0} d^\dagger_{j} d_{j+1} + \mbox{H.c.}\right),
\label{eq:H_benzene}
\end{equation}
where the benzene ring is threaded by a magnetic flux $\Phi$, and
%$t(\Phi)=|t|\exp(i\pi\Phi/3\phi_0)$, where 
$\phi_0=hc/e$ is the magnetic flux quantum.
The system is depicted in Fig.\ \ref{fig:benzene}.
For benzene, the nearest-neighbor coupling\cite{Barr12} is taken as $t=2.7\mbox{eV}$ and the fermion
annihilation operators satisfy periodic boundary conditions, $d_0=d_6$, $d_7=d_1$.
For simplicity, spin is neglected, which reduces the total entropy of the system by a factor of two.

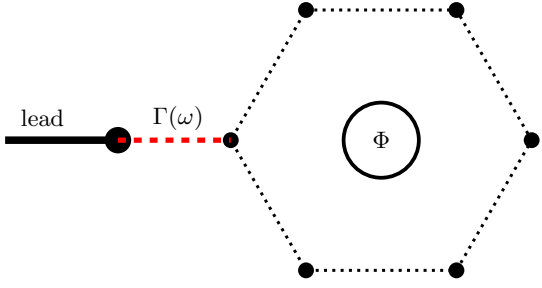
\begin{figure}
\centering
\begin{tikzpicture}
\fill (-1,1.732) circle  (3pt);
\fill (1,1.732) circle  (3pt);
\fill (2,0) circle  (3pt);
\fill (1,-1.732) circle  (3pt);
\fill (-1,-1.732) circle  (3pt);
\fill (-2,0) circle  (3pt);
\draw[dotted, line width=0.4mm]  (-1,1.732)--(1,1.732)--(2,0)--(1,-1.732)--(-1,-1.732)--(-2,0)--(-1,1.732);
\draw[line width=0.5mm] (0,0) circle (0.5cm);
\node (0,0) {$\Phi$};
\draw[line width=1mm] (-5,0)--(-3.5,0);
\fill (-3.5,0) circle (5pt);
\draw[dashed, line width=0.7mm, red] (-3.5,0)--(-2,0);
\node at (-4.5,0.3) {lead};
\node at (-2.7,0.3) {$\Gamma(\omega)$};
\end{tikzpicture}
\caption{An open quantum system consisting of a benzene molecule threaded by a magnetic flux $\Phi$, coupled to a single macroscopic reservoir (lead).}
%The benzene molecule has only one of its atom coupled to the lead}
\label{fig:benzene}
\end{figure}

\subsection{Fully open quantum system}
\label{sec:appendix_fully}

Let us first consider the case of a fully open quantum system with retarded self-energy
\begin{equation}
\Sigma_{nm}(\omega) = -\frac{i \gamma}{2} \delta_{nm}.
\label{eq:Sigma_fully}
\end{equation}
For this case, the density of states may be calculated in closed form as
\begin{equation}
    g(\omega,\Phi)=\frac{6}{\pi} \frac{K(\omega)Q(\omega)-L(\omega)P(\omega,\Phi)}{P(\omega,\Phi)^2+Q(\omega)^2},
\label{eq:DOS_fully}
\end{equation}
where the four functions $K,L,P$ and $Q$ are given by
\begin{widetext}
\begin{equation}
    K(\omega)=x(x^2-3y^2-t^2)
(x^2-y^2-3t^2)-2xy^2(3x^2-y^2-t^2),
\end{equation}
\begin{equation}
    L(\omega)=  2x^2y(x^2-3y^2-t^2)+y(3x^2-y^2-t^2)(x^2-y^2-3t^2),
\end{equation}
\begin{equation}
%     P(\omega,\Phi)=x^2(x^2-3y^2-3t^2)^2-y^2(3x^2-y^2-3t^2)^2-4t^6\cos^2\left(\pi\frac{\Phi}{\Phi_0}\right)
     P(\omega,\Phi)=x^2(x^2-3y^2-3t^2)^2-y^2(3x^2-y^2-3t^2)^2-4t^6\cos^2(\varphi)
\end{equation}
and
\begin{equation}
    Q(\omega) = 2xy(x^2-3y^2-3t^2)
(3x^2-y^2-3t^2),
\end{equation}
\end{widetext}
respectively,
where $x \equiv \omega$, $y \equiv \gamma/2$, and $\varphi\equiv\pi \Phi/\phi_0$.
It can be shown that $P(\omega,\Phi)^2+Q(\omega)^2 \neq 0$ $\forall \omega \in {\mathcal R}$,
so $g(\omega,\Phi)$ is well-defined everywhere along the real-$\omega$ axis.  This case is illustrated in Fig.\ \ref{fig:3rd_law}(lower panel).

\subsection{Open system with localized states}
\label{sec:appendix_localized}

Next, let us consider coupling the benzene molecule to the reservoir via a single covalent bond, as shown in Fig.\ \ref{fig:benzene}.
The retarded self-energy for this case is
\begin{equation}
\Sigma_{nm}(\omega) = -\frac{i\gamma}{2} \delta_{m1}\delta_{n1}.
\label{eq:Sigma_loc}
\end{equation}
Since $\rank\{\Gamma\}=1$, localized states occur whenever there is a degeneracy in the spectrum of $H_{\rm sys}$, as discussed in Sec.\
\ref{sec:example_fully}.
The density of states is given by
\begin{widetext}
\begin{equation}
   g(\omega,\Phi)=-\frac{1}{\pi t} \frac{-UE^2(E^2-3)^2(E^4+3)-4U(5E^4-12E^2+3)\cos^2(\varphi)}{
\left[4\cos^2(\varphi)-E^2(E^2-3)^2\right]^2+\left[UE(E^2-3)(E^2-1)\right]^2},
\label{eq:DOS_loc}
\end{equation}
\end{widetext}
where $E\equiv \omega/t$, $U\equiv \gamma/2t$, and again $\varphi\equiv\pi \Phi/\phi_0$.
A careful analysis of the denominator in Eq.\ (\ref{eq:DOS_loc}) reveals that
$g(\omega,\Phi)$ is singular for $(E,\Phi/\phi_0)= (\pm 1,n)$, $(E,\Phi/\phi_0)= (\pm \sqrt{3},n+1/2)$, and $(E,\Phi/\phi_0)= (0, n+1/2)$,
where $n$ is an integer.
We treat the two special values of the flux separately in the next two sections.

\subsubsection{Integer flux: $\Phi=n\phi_0$}
\label{sec:integer_flux}

For $\Phi=n\phi_0$, $g(E)$ is regular except at $E=\pm 1$.  Evaluating the limiting behavior as $\Phi\rightarrow n\phi_0$, one finds
\begin{widetext}
\begin{equation}
  g(\omega,\Phi=n\phi_0)=\frac{\delta(E-1)}{t}+\frac{\delta(E+1)}{t}+\frac{U}{\pi t} \frac{E^6-4E^4+3E^2+12}{(E^2-1)^2(E^2-4)^2+U^2E^2(E^2-3)^2}.
\end{equation}
\end{widetext}
For this case,
\begin{equation}
\lim_{T\rightarrow 0}S(\mu,T)=\left\{ \begin{array}{lr} k_B \ln 2, & \mu=\pm t, \\ 0, & \mbox{otherwise.} \end{array}\right.
\end{equation}
The effect of the localized states at $E=\pm 1$ on the entropy is illustrated in Fig.\ \ref{fig:3rd_law}(upper panel).

\subsubsection{Half-odd integer flux: $\Phi=(n+1/2)\phi_0$}
\label{sec:half_integer_flux}

For $\Phi=(n+1/2)\phi_0$, $g(E)$ is regular except at $E=0,\pm \sqrt{3}$.  Evaluating the limiting behavior as $\Phi\rightarrow (n+1/2)\phi_0$, one finds
\begin{widetext}
\begin{equation}
 g\left(\omega,(n+1/2)\phi_0\right)
=\frac{\delta(E)}{t}+\frac{\delta(E-\sqrt{3})}{t}+\frac{\delta(E+\sqrt{3})}{t}+\frac{U}{\pi t} \frac{E^4+3}{E^2(E^2-3)^2+U^2(E^2-1)^2}.
\end{equation}
\end{widetext}
For this case,
\begin{equation}
\lim_{T\rightarrow 0}S(\mu,T)=\left\{ \begin{array}{lr} k_B \ln 2, & \mu=\pm \sqrt{3}t, \\ 
k_B \ln 2, & \mu=0,\\
0, & \mbox{otherwise.} \end{array}\right.
\end{equation}

\bibliography{refs,3rdLawLit} 

\end{document}